\newtheorem{theorem}{Theorem}
\newtheorem{lemma}[theorem]{Lemma}
\title{On the complexity of solving Subtraction games}
\author{Kamil Khadiev\thanks{
Center for Quantum Computer Science, Faculty of Computing, University of Latvia, Riga, Latvia; Kazan Federal University, Kazan, Russia; e-mail: kamilhadi@gmail.com}
\and 
Dmitry Kravchenko\thanks{Center for Quantum Computer Science, Faculty of Computing, University of Latvia, Riga, Latvia; e-mail: kravchenko@gmail.com}}
\begin{document}
\maketitle

\begin{abstract}
We study algorithms for solving Subtraction games, which sometimes are referred to as one-heap Nim games.

We describe a quantum algorithm which is applicable to any game on DAG, and show that its query compexity for solving an arbitrary Subtraction game of $n$ stones is $O\left(n^{3/2}\log n\right)$.

The best known deterministic algorithms for solving such games are based on the dynamic programming approach \cite{cormen2001}. We show that this approach is asymptotically optimal and that classical query complexity for solving a Subtraction game is generally $\Theta\left(n^2\right)$.

This paper perhaps is the first explicit ``quantum'' contribution to algorithmic game theory.
\end{abstract}

\section{Introduction}

A Subtraction game is similar to a canonical Nim game \cite{f2000}. The difference is that players deal with just one heap of stones but with some limitations imposed on the number of stones they can take from the heap. The most common limitation is defining maximum for the number of stones to be taken away, and this kind of games has nice combinatorial solutions.
Here we study a much more general class of such limitations and thus a broader class of Subtraction games.

\section{Definitions}

\subsection{Subtraction games}
Let us formally define what we call a Subtraction game throughout this paper:
It is a two-player game in which the players alternately remove some positive amounts of stones from a heap. Let $n$ be the initial number of stones in the heap, and $\Gamma$ be a (triangular) binary matrix of size $n\times n$. A player can remove $j-i$ stones once there remain exactly $j$ stones in the heap ($0 \le i < j \le n$) iff $\Gamma_{j,i}=1$. One who cannot make a legal move loses, and their opponent wins the game. Thus a player wins a game if he takes all the remaining stones, or leaves such number of stones $j$ that no allowed moves remain: $\sum_{i=0}^{j-1} \Gamma_{j,i} = 0$.

Obviously, the rules of a Subtraction game are fully determined by such matrix $\Gamma$, so hereafter we sometimes refer to $\Gamma$ as to a game. We also reserve name $n$ to denote the initial size of stones, which also corresponds to the dimension of the matrix $\Gamma$. Note that there are only $n\left(n+1\right)/2$ meaningful bits in the matrix $\Gamma$, as a player cannot take more stones than there remain in the heap: $\Gamma_{j,i}=0$ for all $j<i$.

\subsection{Winning function}
We define a Boolean function $\textsc{Win}$: ${\textsc{Win}\left(\Gamma\right)=1}$ iff the first player has a winning strategy in game $\Gamma$.

We also extend its domain to all possible positions in the game: ${\textsc{Win}\left(\Gamma,j\right)=1}$ iff a player having $j$ stones in game $\Gamma$ has a winning strategy. In particular:
\begin{itemize}
\item $\textsc{Win}\left(\Gamma,n\right)=\textsc{Win}\left(\Gamma\right)$; \item $\textsc{Win}\left(\Gamma,0\right)=0$;
\item $\textsc{Win}\left(\Gamma,j\right) \iff \exists i:\Gamma_{j,i} \land \neg \textsc{Win}\left(\Gamma,i\right)$.
\end{itemize}

Hereinafter we alternately use Boolean and integer forms of these and other values: $\mathrm{TRUE}=1$ and $\mathrm{FALSE}=0$.

\subsection{Properties of Subtraction games}

We call a game $\Gamma$ \emph{$k$-balanced} if the number of winning positions differs from the number of losing positions by at most $k$:
\begin{equation}\label{eq:balanced}
\Big|n/2-\sum_{j=1}^n{\textsc{Win}\left(\Gamma,j\right)}\Big| \le k/2
\end{equation}

We call a game $\Gamma$ \emph{sensitive} if in each winning position a player has a unique winning move. Or, equivalently, if in each position a player can have at most one winning move:
\begin{equation}\label{eq:sensitive}
\forall j: \Big|\left\{i:\Gamma_{j,i} \land \neg \textsc{Win}\left(\Gamma,i\right)\right\}\Big| \leq 1
\end{equation}

In the next section we study sensitive $k$-balanced games for small $k$-s.
Hereafter one can assume $k=0$ for simplicity, but all our considerations also hold for any $k=o\left(n\right)$. The exact value of $k$ only affects the size of the considered subset of games.

\section{Classical query compexity}

\begin{lemma}\label{sensitivitylemma}
Let $\Gamma$ be a losing sensitive $o\left(n\right)$-balanced Subtraction game $\Gamma$ picked uniformly at random: $\eqref{eq:balanced} \land \eqref{eq:sensitive} \land \neg \textsc{Win}\left(\Gamma\right)$. Let game $\Gamma'$ differ from $\Gamma$ in exactly one random bit of the binary representation: $\textsc{HammingDistance}\left(\Gamma,\Gamma'\right)=1$. Then $\mathbb{E}\big[\textsc{Win}\left(\Gamma'\right)\big] \ge 1/24.$
\end{lemma}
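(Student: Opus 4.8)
The plan is to show that flipping a uniformly random bit makes the losing game $\Gamma$ winning with probability at least $1/24$, through a short cascade set off by the flipped bit. It suffices to produce, with constant probability, a position $p<n$ that is winning in $\Gamma$, losing in $\Gamma'$, and has $\Gamma_{n,p}=1$; for then $n$ has a move in $\Gamma'$ to the $\Gamma'$-losing position $p$, so $\textsc{Win}(\Gamma')=1$. The observation that makes this usable is that if we condition on everything except the last row of $\Gamma$ --- on its win/lose profile, on which bit (for the mechanism below, necessarily in a row $<n$) was flipped, and hence on which positions $<n$ flip --- then the bits $\Gamma_{n,p}$ for winning $p<n$ are mutually independent fair coins, since a losing row is forced to $0$ on the losing columns but is otherwise unconstrained. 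Hence, if the flip produces some winning $p<n$ that becomes losing with probability $\ge c$, then $\mathbb{E}\big[\textsc{Win}(\Gamma')\big]\ge c/2$, and I will get $c\approx 1/12$.

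To produce such a $p$ I would use the following mechanism. A flipped bit $\Gamma_{j_0,i_0}$ can change a position's status only if $i_0$ is a losing position of $\Gamma$; if moreover $j_0$ is losing, then $\Gamma_{j_0,i_0}=0$, the flip adds the move $j_0\to i_0$, and $j_0$ becomes winning in $\Gamma'$. By \eqref{eq:sensitive} a winning position has a unique winning move, so the least position $p>j_0$ that is winning in $\Gamma$ and whose winning move goes to $j_0$ (if one exists) now has no winning move at all --- its remaining moves go to positions $<p$, whose statuses are untouched by changing row $j_0$ --- and is therefore losing in $\Gamma'$. This $p$ is exactly the position needed above (and if $j_0=n$ the flip already wins outright). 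The bit-positions $(j_0,i_0)$ with both $i_0$ and $j_0$ losing number $\binom{L}{2}$, where $L$ is the number of losing positions; by \eqref{eq:balanced} with $k=o(n)$ we have $L=(1/2+o(1))n$, so these form a $(1/4+o(1))$-fraction of all $\binom{n+1}{2}$ bits.

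It remains to estimate, conditioned on having flipped such a ``both-losing'' bit, the probability that a suitable $p$ exists. Listing the losing positions as $0=q_0<q_1<\dots<q_{L-1}=n$, the position $j_0$ that turns winning equals $q_t$ with probability proportional to $t$, since $q_t$ has exactly $t$ losing columns below it. Conditioned on $j_0=q_t$, the winning move of a winning position $p$ is, for a uniformly random sensitive game, (near-)uniform among the losing positions below $p$, so $p$ points to $q_t$ with probability roughly one over the number of losing positions below $p$; summing over the winning $p>q_t$ shows that a suitable $p$ exists with probability $\approx 1-t/(L-1)$. Averaging $1-t/(L-1)$ against the weight $\propto t$ gives $\approx 1/3$, whence $c\approx \tfrac14\cdot\tfrac13=\tfrac1{12}$ and $\mathbb{E}\big[\textsc{Win}(\Gamma')\big]\ge c/2=\tfrac1{24}$; flips lying in row $n$, which directly hand $n$ a winning move, only add to this.

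The main obstacle is making the cascade and these counts rigorous: (i) verifying that the candidate $p$ is still losing after the \emph{entire} cascade has settled, i.e.\ that no later position flipping to losing restores a winning move to $p$; (ii) controlling the distribution of a uniformly random sensitive $o(n)$-balanced losing game closely enough to justify the near-uniformity of the winning moves, the last-row independence used in the reduction, and the effect of the $o(n)$-balance slack on every $\binom{\cdot}{\cdot}$ count; and (iii) carrying the averaging through with all these error terms. This bookkeeping is exactly what upgrades the heuristic $\tfrac14\cdot\tfrac13\cdot\tfrac12$ to the stated bound $1/24$.
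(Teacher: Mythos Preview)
Your proposal follows essentially the same argument as the paper: flip a bit $\Gamma_{j_0,i_0}$ with both coordinates losing (contributing a factor $\tfrac14$), so $j_0$ turns winning; take the \emph{least} winning $p>j_0$ whose unique winning move was $j_0$, which therefore turns losing (the paper likewise insists on the least such $j'$); then use that $\Gamma_{n,p}$ is an unconstrained fair coin (another $\tfrac12$), and estimate the probability such a $p$ exists as roughly $1-\tfrac{j_0}{n}$, whose average against the size-biased distribution of $j_0$ gives $\tfrac13$. The only cosmetic difference is that you index by the rank $t$ of $j_0$ among losing positions while the paper works with the absolute position $j$, and you are somewhat more explicit about the conditioning and the obstacles that the paper also leaves at the heuristic level.
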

\begin{proof}

We first make three assumptions for $\forall j,i \left(0 \le i < j \le n\right)$:
\begin{itemize}
\item $\mathbb{E}\big[\textsc{Win}\left(\Gamma,j\right)\big]=1/2$, which does not fully correspond to the uniform distribution for the considered subset of games, but is asymptotically equivalent to it (i.e. expected difference between the two pmf-s is neglectable for large $n$-s). We leave this fact without formal proof, because even non-uniform distribution is sufficient for our purposes. We also neglect addend $\pm k/n=\pm o\left(1\right)$ in this estimation.
\item $\textsc{Win}\left(\Gamma,i\right) \implies \mathbb{E}\big[\Gamma_{j,i}]=1/2$. Informally: a winning position $i$ is achievable from any preceding position $j$ with probability $1/2$. This assumption is perfectly correct since possibility or impossibility to make a losing move in any position leaves a game strategically unaffected.
\item $\textsc{Win}\left(\Gamma,j\right) \land \neg \textsc{Win}\left(\Gamma,i\right) \implies \mathbb{E}\big[\Gamma_{j,i}]=1/\sum_{i'=0}^{j-1}\neg \textsc{Win}\left(\Gamma,i'\right)$. Informally: from a winning position $j$, all subsequent losing positions $i'$ are achievable equiprobably (and these probabilities sum up to $1$). This assumption also is perfectly correct due to the definition of the considered subset of games.
\end{itemize}

Now let $\Gamma'_{j,i} \ne \Gamma_{j,i}$ for some pair of indices $j,i$ picked at random. Then:
\begin{itemize}
  \item $\mathrm{Pr}\big[\neg \textsc{Win}\left(\Gamma,j\right)\big] = 0.5$
  \item $\forall j' (j<j'<n):$
    \begin{sloppypar}$\mathrm{Pr}\big[\textsc{Win}\left(\Gamma,j'\right) \land \Gamma_{j',j} \mid \neg \textsc{Win}\left(\Gamma,j\right)\big] = 0.5/\mathbb{E}\big[\sum_{i'=0}^{j'-1}\neg \textsc{Win}\left(\Gamma,i'\right)\big] = \displaystyle\frac{1}{j'}$\end{sloppypar}
    \begin{sloppypar}Informally: each preceding $j'$ gives some small chance of $1/j'$ for a losing position $j$ to be accessible from a winning position $j'$.\end{sloppypar}
  \item $\mathrm{Pr}\big[\nexists j':\textsc{Win}\left(\Gamma,j'\right) \land \Gamma_{j',j} \mid \neg \textsc{Win}\left(\Gamma,j\right)\big] = \prod_{j'=j+1}^{n-1} \left(1-\displaystyle\frac{1}{j'}\right)=\displaystyle\frac{j}{n-1}$
    \begin{sloppypar}Informally: even though each previously estimated chance was small, alltogether they result in some significant probability for a losing position $j$ to be accessible from at least one preceding winning position (of totally $n-j-1$ preceding positions).\end{sloppypar}
\end{itemize}

We follow that once $j$ is a losing position, with probability $1-\frac{j}{n-1}$ it is achievable from some winning position $j'$. We shall consider the least $j'$ if there happen to occur several (this can be important!).

On the other side, $\mathrm{Pr}\big[\neg \textsc{Win}\left(\Gamma,i\right)\big] = 0.5$, so once $j$ is a losing position, then with probability $0.5$ position $i$ also is losing, and thus $\Gamma_{j,i}=0$ (since one losing position cannot be achievable from another losing position). And alltogether: with probability
\begin{equation}\label{eq:prob1}
\left(1-\displaystyle\frac{j}{n-1}\right) \times 0.5 \times 0.5
\end{equation}
we have $\Big(\textsc{Win}\left(\Gamma,j'\right) \land \Gamma_{j',j}\Big) \land \neg \textsc{Win}\left(\Gamma,j\right) \land \neg \textsc{Win}\left(\Gamma,i\right)$.

Now let us observe what happens with $\Gamma'$ in this (not so improbable) case: we have $\textsc{Win}\left(\Gamma',i\right)=\textsc{Win}\left(\Gamma,i\right)=0$ and $\Gamma'_{j,i}=\neg \Gamma_{j,i}=1$. It implies $\textsc{Win}\left(\Gamma',j\right)$ since $\Gamma'$ allows access from $j$ to the losing position $i$. And then the only winning move in position $j'$ becomes obsolete: $\textsc{Win}\left(\Gamma',j'\right)=0$ (some other winning positions of game $\Gamma$ also may become losing in $\Gamma'$, but it definitely should happen with the ``smallest'' former winning position $j'$).

Finally we note that with probability $\mathrm{Pr}\big[\Gamma_{n,j'}=1\big] = 0.5$ this difference also implies $\textsc{Win}\left(\Gamma',n\right) = 1 \neq \textsc{Win}\left(\Gamma,n\right)$. This probability together with \eqref{eq:prob1} results in the overall lower bound:
\begin{align*}
  \mathrm{Pr}\big[\textsc{Win}\left(\Gamma',n\right) \neq \textsc{Win}\left(\Gamma,n\right)\big] & \ge \mathbb{E} \big[ 0.5 \times \left(1-\frac{j}{n-1}\right) \times 0.5 \times 0.5 \big] \\
  & = \left(1 - \frac{\mathbb{E}\left[j\right]}{n-1}\right)/8 \\
  & \approx \frac{1}{24} \\
\end{align*}
\end{proof}

\begin{theorem}
There is no deterministic or randomized algorithm for solving function $\textsc{Win}$ faster than in $O\left(n^2\right)$ steps.
\end{theorem}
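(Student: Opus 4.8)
The plan is to bootstrap Lemma~\ref{sensitivitylemma} into an $\Omega\left(n^2\right)$ query lower bound by a standard sensitivity argument, and to combine this with the obvious $O\left(n^2\right)$ upper bound of dynamic programming, yielding $\Theta\left(n^2\right)$ for the classical query complexity of $\textsc{Win}$.

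I would first set up the deterministic case. Work with the distribution $\mathcal D$ from the lemma (losing, sensitive, $o\left(n\right)$-balanced games, uniform), and recall that an instance $\Gamma$ consists of $N=n\left(n+1\right)/2$ relevant bits $\Gamma_{j,i}$. The single structural fact needed is elementary: if a deterministic algorithm $\mathcal A$ never reads the cell $\left(j,i\right)$ while processing $\Gamma$, then on the one-bit perturbation $\Gamma'$ it issues exactly the same sequence of queries, gets exactly the same answers, and outputs the same value; hence, if $\mathcal A$ computes $\textsc{Win}$ correctly everywhere and $\textsc{Win}\left(\Gamma'\right)\ne\textsc{Win}\left(\Gamma\right)$, then $\left(j,i\right)$ was necessarily queried. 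So the number of queries $\mathcal A$ makes on $\Gamma$ is at least the number of bits whose flip changes $\textsc{Win}\left(\Gamma\right)$. Taking expectations over $\Gamma\sim\mathcal D$ and over the flipped bit, Lemma~\ref{sensitivitylemma} (noting $\textsc{Win}\left(\Gamma\right)=0$, so $\textsc{Win}\left(\Gamma'\right)\ne\textsc{Win}\left(\Gamma\right)$ is exactly $\textsc{Win}\left(\Gamma'\right)=1$) gives that this count is at least $N/24$ in expectation; hence even the $\mathcal D$-average, and a fortiori the worst-case, query complexity of $\mathcal A$ is $N/24=\Omega\left(n^2\right)$.

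For randomized algorithms I would invoke Yao's minimax principle: the worst-case expected query complexity of any randomized algorithm is at least the $\mathcal D$-average expected query complexity of the best deterministic one, which we just bounded. To make the perturbation argument survive two-sided error I would first amplify the success probability, pushing the error below a small constant $\varepsilon$ with $O\left(1\right)$ repetitions and a majority vote, and truncate each run after a fixed multiple of its expected length (Markov's inequality adds only $O\left(\varepsilon\right)$ to the error); then for fixed $\Gamma$, fixed flipped bit, and fixed random string, a cell that is sensitive but unread forces a wrong answer on $\Gamma$ or on $\Gamma'$, so the probability that $\left(j,i\right)$ is read is at least $\tfrac1{24}-O\left(\varepsilon\right)$, a positive constant, and this probability equals the expected number of queries divided by $N$; dividing out reproduces the $\Omega\left(n^2\right)$ bound.

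The part I expect to be the actual work is exactly this randomized reconciliation --- matching the clean ``unread cell $\Rightarrow$ unchanged output'' implication against a two-sided error budget and against the difference between expected and worst-case run length, which is why the amplify-then-truncate preprocessing enters; everything else is bookkeeping. Finally I would record the matching upper bound: the textbook dynamic program computing $\textsc{Win}\left(\Gamma,0\right),\textsc{Win}\left(\Gamma,1\right),\dots,\textsc{Win}\left(\Gamma,n\right)$ in order reads at most $N$ bits in total and solves $\textsc{Win}$, so its query complexity is $O\left(n^2\right)$; combined with the lower bound, the classical query complexity of $\textsc{Win}$ is $\Theta\left(n^2\right)$.
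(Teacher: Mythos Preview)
Your overall strategy --- bootstrap Lemma~\ref{sensitivitylemma} into a sensitivity lower bound --- is the same as the paper's, and your deterministic paragraph is fine (indeed more carefully stated than the paper's).

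There is, however, a genuine slip in the randomized paragraph. You write that by Yao's principle the randomized complexity is at least ``the $\mathcal D$-average expected query complexity of the best deterministic one, which we just bounded.'' But you did \emph{not} bound that quantity: your deterministic bound was for algorithms correct on \emph{all} inputs, whereas Yao needs a deterministic algorithm that is merely correct with high probability \emph{over} $\mathcal D$. Since $\mathcal D$ is supported entirely on losing games, the constant-$0$ algorithm is perfectly correct over $\mathcal D$ and makes zero queries, so Yao against $\mathcal D$ gives nothing. The paper explicitly flags this pitfall and patches it by working over $\textsc{LSB}'=\bigcup_{\Gamma\in\textsc{LSB}}\textsc{HammingBall}(\Gamma,1)$, which contains enough winning instances to defeat the trivial algorithm.

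The good news is that the argument you actually sketch right afterwards --- amplify, truncate, fix the random string, and observe that an unread sensitive cell forces a wrong answer on $\Gamma$ or on $\Gamma'$, hence $\Pr[(j,i)\text{ read}]\ge \tfrac{1}{24}-O(\varepsilon)$, then sum over cells --- is a correct \emph{direct} lower bound on the randomized algorithm and does not need Yao at all. So drop the Yao sentence (or, if you want to keep it, switch the hard distribution to something like $\textsc{LSB}'$ so that both answers occur), and the proof goes through; in fact this direct route is cleaner and more explicit than what the paper writes.
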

\begin{proof}
We shall prove this impossibility by analyzing performance of the best classical algorithm on the set $\textsc{LSB}$ of losing sensitive $o\left(n\right)$-balanced Subtraction games.
But perhaps we first need to make a couple of remarks on the eligibility of such proof:
\begin{itemize}
\item It may seem that $\textsc{LSB}$ is a too small set, and that we are going to prove the lower bound for some negligible number of games. However, actually the number of such games is roughly $2^{n^2/4}$ while the total number of Subtraction games is roughly $2^{n^2/2}$. That is, a game from $\textsc{LSB}$ and an arbitrary Subtraction game are representable by asymptotically similar numbers of bits. Although just one example would be sufficient to make a statement about the ``worst case'' complexity, we here demonstrate that a \textit{significant} part of all Subtraction games are hard to solve. We also note that we strongly believe that similar bound should also hold for the whole set of Subtraction games.
\item It may seem that since $\textsc{LSB}$ contains only losing games ($\textsc{Win}\left(\Gamma\right)=0$), one could design an algorithm which somehow recognizes that a game belongs to this set and returns answer $0$. Actually we can easily refute this criticism by extending this set with sufficient number of $\Gamma'$-s which often are winning and always are hardly-distinguishable from the losing games of $\textsc{LSB}$. Formally, we can consider set $\textsc{LSB}' = \displaystyle\bigcup_{\Gamma \in \textsc{LSB}}\textsc{HammingBall}\left(\Gamma,1\right)$.
\end{itemize}

Suppose a classical algorithm, given $\Gamma \in \textsc{LSB}$, reports an answer after querying on average less than $n\left(n+1\right)/2/24$ bits of $\Gamma$.

Obviously, such algorithm cannot pretend to be correct with probability more than $0.5$, since Lemma~\ref{sensitivitylemma} implies that there are on average $n\left(n+1\right)/2/24$ such crucial bits $\Gamma_{j,i}$ that inverting the bit also inverts the value of the game $\textsc{Win}\left(\Gamma\right)$. Leaving any such bit unchecked means failing to guess $\textsc{Win}\left(\Gamma\right)$ with adequate probability.

\end{proof}

\section{Quantum Algorithm}

In this section we suggest a quantum algorithm for solving an arbitrary Subtraction game (should it belong to $\textsc{LSB}$ or not). We consider directed acycling graph (DAG) $G=\left(V,E\right)$ with adjacency matrix $\Gamma$: set $V$ corresponds to $n+1$ positions of the game $\Gamma$, and set $E$ corresponds to all legal moves.

The algorithm applies dynamic programming approach \cite{k2018,cormen2001}: it solves a problem using precomputed solutions of smaller parts of the same problem. For analyzing DAGs it typically means usage of some modification of Depth-first search algorithm (DFS) as a subroutine \cite{cormen2001}.

We exploit the same idea, but for solving $\textsc{Win}\left(\Gamma,j\right)$ (for each vertex $j$, starting from $j=1$) we use Grover's Search algorithm \cite{g96,bbht98} to find a losing vertex $i'$ among directly accessible vertices $\textsc{Adj}\left[j\right] \stackrel{\text{def}}{=} \left\{i: \Gamma_{j,i}\right\}$, s.t. $\textsc{Win}\left(\Gamma,i'\right)=0$. This algorithm has two important properties:
\begin{itemize}
\item its time complexity is $O\left(\sqrt{\deg{j}}\right)$, where $\deg{j} \stackrel{\text{def}}{=} \big|\textsc{Adj}\left[j\right]\big|$ is the number of vertices directly accessible from the vertex $j$;
\item it returns the desired losing vertex $i'$ with some constant probability (say $0.5$) if such vertex exists.
\end{itemize}
In Algorithm \ref{quantumalgo} we use Grover's Search in form of function $\textsc{Grover\_IsZeroAmong}$ which returns $0$ or $1$ equiprobably if there is zero among its input bits, and returns $0$ if all inputs are ones. We store the search results in array $w$ and reuse them in all the subsequent searches.

\begin{algorithm}
\caption{Quantum Algorithm for solving $\textsc{Win}\left(\Gamma\right)$}\label{quantumalgo}
\begin{algorithmic}
\State $w_0 \gets 0$
\For{$j=1\ldots n$}\Comment{$O\left(n\right)$}
  \State $w_j \gets 0$
  \For{$z=1\ldots 2 \cdot \log_2{n}$}\Comment{$O\left(\log{n}\right)$}
    \If{$\textsc{Grover\_IsZeroAmong} \left\{w_i \mid i \in \textsc{Adj}\left[j\right]\right\}$}\Comment{$O\left(\sqrt{\deg{j}}\right)$}
      \State $w_j \gets 1$
    \EndIf
  \EndFor
\EndFor
\State \Return $w_n$
\end{algorithmic}
\end{algorithm}

\begin{theorem}\label{th:and-or-dag}
Algorithm~\ref{quantumalgo} computes $\textsc{Win}\left(\Gamma\right)$ in time $O\left(\sqrt{n\left|E\right|}\log n\right)$ and with error probability $\epsilon \lesssim 1/n$.
\end{theorem}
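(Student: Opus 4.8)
The plan is to bound separately the running time and the error probability of Algorithm~\ref{quantumalgo}, since these two quantities are essentially independent of each other.

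\textbf{Running time.} First I would observe that the outer loop runs $n$ times, and for each fixed $j$ the inner loop runs $2\log_2 n$ times, each iteration invoking $\textsc{Grover\_IsZeroAmong}$ on the set $\{w_i : i \in \textsc{Adj}[j]\}$, which costs $O(\sqrt{\deg j})$ queries/time. Hence the total time is $O\!\left(\log n \cdot \sum_{j=1}^n \sqrt{\deg j}\right)$. The remaining step is to estimate $\sum_{j=1}^n \sqrt{\deg j}$. Since $\sum_{j=1}^n \deg j = |E|$ and the square-root function is concave, the Cauchy--Schwarz inequality (equivalently, Jensen) gives $\sum_{j=1}^n \sqrt{\deg j} \le \sqrt{n \sum_{j=1}^n \deg j} = \sqrt{n|E|}$. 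Multiplying by the $O(\log n)$ factor from the amplification loop yields the claimed $O\!\left(\sqrt{n|E|}\,\log n\right)$ bound.

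\textbf{Correctness and error probability.} Here I would argue that $w_j = \textsc{Win}(\Gamma,j)$ for every $j$, except with small probability. I would proceed by induction on $j$, conditioning on the event that $w_i$ is correct for all $i < j$. Fix $j$ and assume all earlier values are correct. If $\textsc{Win}(\Gamma,j)=0$, then no $i \in \textsc{Adj}[j]$ has $w_i = 0$ (every reachable position is winning), so every call to $\textsc{Grover\_IsZeroAmong}$ returns $0$ deterministically and $w_j$ stays $0$: no error is introduced. If $\textsc{Win}(\Gamma,j)=1$, then some reachable $i$ has $w_i = 0$, and each of the $2\log_2 n$ independent calls sets $w_j \gets 1$ with probability at least $1/2$; the probability that all of them fail is at most $2^{-2\log_2 n} = n^{-2}$. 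Thus conditioned on all earlier values being correct, $w_j$ is wrong with probability at most $n^{-2}$. A union bound over $j = 1,\dots,n$ gives total error probability at most $n \cdot n^{-2} = 1/n$, which is the stated $\epsilon \lesssim 1/n$. (One should also note that the $O(\sqrt{\deg j})$ cost of Grover's search holds whether or not a marked element exists, so the time bound is unaffected by which branch occurs.)

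\textbf{Main obstacle.} The time analysis is essentially a one-line concavity estimate, so the only genuinely delicate point is the error analysis: one must be careful that errors can only propagate ``upward'' (from smaller $j$ to larger $j$) and that the losing-position case contributes no error at all, so that a naive union bound over the $n$ positions — rather than over all $O(\log n)$ Grover calls or something larger — already suffices. Making the conditioning in the induction precise, so that the per-step failure probability is genuinely bounded by $n^{-2}$ given the past, is the part that needs the most care; everything else is routine.
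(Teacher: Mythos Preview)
Your proposal is correct and follows essentially the same approach as the paper: Cauchy--Schwarz (equivalently, concavity of $\sqrt{\cdot}$) to bound $\sum_j \sqrt{\deg j}$ by $\sqrt{n|E|}$, and the $2^{-2\log_2 n}=n^{-2}$ per-vertex failure probability aggregated over $n$ vertices. Your treatment is in fact more careful than the paper's, which declares correctness ``obvious'' and uses the product bound $(1-1/n^2)^n \gtrsim 1-1/n$ where you use the equivalent union bound; your explicit case split showing that losing positions contribute zero error is a point the paper leaves implicit.
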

\begin{proof}$ $

Correctness of the algorithm is obvious.

Time complexity follows from Cauchy-Bunyakovsky-Schwarz inequality:
$$ \sum_{j=1}^{n}{\sqrt{\deg{j}}} \le \sum_{j=1}^{n}{\sqrt{\mathbb{E}_j\left[\deg{j}\right]}}=\sum_{j=1}^{n}{\sqrt{\left|E\right|/n}}=\sqrt{n\left|E\right|}.$$

\sloppy{The probability of error in evaluating one particular $w_j$ is ${2^{-2\log_2 n}=1/{n^2}}$, so the probability of no error at all among evaluations of $w_1, \ldots, w_n$ is ${\left(1-1/{n^2}\right)^n \gtrsim 1-1/n}$.}

\end{proof}

We note that for a random Subtraction game the expected number of edges $\mathbb{E}\big[\left|E\right|\big] = O\left(n^2\right)$ and formulate the final conclusion: while the best classical algorithms require time $\Theta\left(n^2\right)$ to solve a Subtraction game, there exists a polynomially faster quantum algorithm which runs in time $O\left(n^{3/2}\log{n}\right)$.

\section*{Acknowledgement}
The work is performed according to the Russian Government Program of Competitive Growth of Kazan Federal University.

The research is supported by ERC Advanced Grant MQC, PostDoc Latvia Program, and by the ERDF within the project 1.1.1.2/VIAA/1/16/099 ``Optimal quantum-entangled behavior under unknown circumstances''.

\end{document}